\begin{document}
\pagestyle{headings}
\titlerunning{Robust Line Planning in case of Multiple Pools and Disruptions}
\authorrunning{A.~Bessas, S.~Kontogiannis, and C.~Zaroliagis}


\newcommand{\myvector}[1]{\ensuremath{\bm{#1}}}
\newcommand{\mv}[1]{\myvector{#1}}
\newcommand{\hvec}[1]{\ensuremath{\myvector{\hat{#1}}}}
\newcommand{\tvec}[1]{\ensuremath{\myvector{\tilde{#1}}}}
\newcommand{\bvec}[1]{\ensuremath{\myvector{\bar{#1}}}}
\newcommand{\matrixrow}[2]{\ensuremath{{\myvector{#1}_{#2,\star}}}}
\newcommand{\matrixcol}[2]{{\ensuremath{\myvector{#1}_{\star,#2}}}}
\newcommand{\mymatrix}[1]{\ensuremath{\bm{#1}}}
\newcommand{\mm}[1]{\mymatrix{#1}}
\newcommand{\Rp}{\matrixcol{R}{p}}
\newcommand{\Rl}{\matrixrow{R}{\ell}}

\newcommand{\reals}{\ensuremath{\mathbb{R}}}
\newcommand{\nonnegativereals}{\ensuremath{\mathbb{R}_{\geq 0}}}

\newcommand{\Event}{\mathcal{E}}
\newcommand{\Ind}[1]{\mathbb{I}_{\left\{#1\right\}}}

\def\vector#1{\vec{#1}}
\def\hvec#1{\vec{\hat{#1}}}
\def\tvec#1{\vec{\tilde{#1}}}
\def\La{\Lambda}
\def\la{\lambda}

\newcommand{\REMOVED}[1]{}

\title{Robust Line Planning in case of Multiple Pools and Disruptions}

\author{%
Apostolos Bessas \inst{1,3}
\and
Spyros Kontogiannis \inst{1,2}
\and Christos Zaroliagis \inst{1,3}
}

\institute{R.A.~Computer Technology Institute, N.~Kazantzaki Str.,
Patras University Campus, 26500 Patras, Greece
\\
\and
Computer Science Department, University of Ioannina,
45110 Ioannina, Greece
\\
\and
Department of Computer Engineering and Informatics, University of
Patras, \\ 26500 Patras, Greece
\\
Email: {\tt mpessas@ceid.upatras.gr}, {\tt kontog@cs.uoi.gr}, {\tt zaro@ceid.upatras.gr}
}

\maketitle
\vspace*{-0.5cm}
\begin{center}
\today
\end{center}
\vspace*{-0.5cm}
\begin{abstract}
We consider the line planning problem in public 
transportation, under a robustness perspective. 
We present a mechanism for robust line planning in the case of multiple
line pools, when the line operators have a different utility function per
pool. We conduct an experimental study of our mechanism on both synthetic
and real-world data that shows fast convergence to the optimum. We also
explore a wide range of scenarios, varying from an arbitrary initial state
(to be solved) to small disruptions in a previously optimal solution (to
be recovered). Our experiments with the latter scenario show that our
mechanism can be used as an online recovery scheme causing the system to
re-converge to its optimum extremely fast.
\end{abstract}

\section{Introduction}
\label{sec:intro}

Line planning is an important phase in the hierarchical planning process
of every railway (or public transportation) network\footnote{For the sake of convenience, we
concentrate in this work on railway networks, but the methods and
ideas developed can be applied to any public transportation network.}.
The goal is to determine the routes (or lines) of trains that will serve the
customers along with the frequency each train will serve a particular
route. Typically, the final set of lines is chosen by a (predefined) set of candidate
lines, called the \emph{line pool}. In certain cases, there may be
\emph{multiple line pools} representing the availability of
the network infrastructure at different time slots or zones. This is due to
variations in customer traffic (e.g., rush-hour pool, late evening pool,
night pool), maintenance (some part of the network at a specific time zone
may be unavailable), dependencies between lines (e.g., the choice of a high-speed
line may affect the choice of lines for other trains), etc.

The line planning problem has been extensively studied under cost-oriented
or customer-oriented approaches (see e.g., \cite{D78,GHK2004,SS2005,S2005}).
Recently, robustness issues have been started to be investigated. In
the \emph{robust line planning} problem, the task is to provide a set of lines
along with their frequencies, which are robust to fluctuations of input
parameters; typical fluctuations include, for instance, disruptions to daily
operations (e.g., delays), or varying customer demands. In \cite{SS2006},
a game-theoretic approach to robust line planning was presented that
delivers lines and frequencies that are robust to delays.

A different perspective of robust line planning was investigated in \cite{BKZ2009}.
This perspective stems form recent regulations in the European Union
that introduce competition and free railway markets for its members.
Under these rules the following scenario emerges: there is a (usually state)
authority that manages the railway network infrastructure, referred to as the
\emph{Network Operator} (NOP), and a (potentially) large number of \emph{Line Operators}
(LOPs) operating as commercial organizations which want to offer services to their
customers using the given railway network. These LOPs act as competing
agents for the exploitation of the shared infrastructure and are unwilling to
disclose their utility functions that demonstrate their true incentives.
The network operator wishes to set up a fair cost sharing scheme for the
usage of the shared resources and to ensure the maximum possible level
of satisfaction of the competing agents (by maximizing their aggregate utility
functions). The former implies a resource pricing scheme that is robust against
changes in the demands of the LOPs, while the latter establishes a notion of a
socially optimal solution, which could also be considered as a fair solution,
in the sense that the average level of satisfaction is maximized. In other
words, the NOP wishes to establish an incentive-compatible mechanism that
provides \emph{robustness} to the
system in the sense that it tolerates the agents' unknown incentives and elasticity
of demand requests and it eventually stabilizes the system at an equilibrium point
that is as close as possible to the social optimum.

The first such mechanism, for robust line planning in the aforementioned scenario,
was presented in \cite{BKZ2009}. In that paper, the following mechanism was investigated
(motivated by the pioneering work of Kelly et al.~\cite{K97,KMT98} in communication networks):
the LOPs offer bids, which they (dynamically) update for buying frequencies. The NOP
announces an (anonymous) resource pricing scheme, which indirectly implies an allocation
of frequencies to the LOPs, given their own bids.
For the case of a single pool of lines, a distributed, dynamic, LOP bidding and (resource)
price updating scheme was presented, whose equilibrium point is the unknown social optimum --
assuming strict concavity and monotonicity of the private (unknown) utility functions.
This development was complemented by an experimental study on a discrete variant of the distributed,
dynamic scheme on both synthetic and real-world data showing that the mechanism converges really
fast to the social optimum. The approach to the single pool was extended to derive an analogous
mechanism for the case of multiple line pools, where it was assumed that (i) the NOP
can periodically exploit a whole set of (disjointly operating) line pools and
he decides on how to divide the whole infrastructure among the different pools so that the
resource capacity constraints are preserved; (ii) each LOP may be interested in different lines
from different pools; and (iii) each LOP has a single utility function which depends on the aggregate
frequency that she gets from all the pools in which she is involved. 
%

The aforementioned theoretical framework demonstrated the potential of converging to the social
optimum via a mechanism (dynamic system) that exploits the selfishness of LOPs. A significant
issue is the speed or rate of convergence of this mechanism. Since there was no theoretical treatment
of this issue, its lack was covered in \cite{BKZ2009} for the single pool case via a complementary experimental 
study. Despite, however, the significance of the convergence rate issue, the mechanism for the multiple pool case 
was \emph{not} experimentally evaluated in \cite{BKZ2009}.

For the case of multiple line pools, it is often more realistic to assume that each LOP has a different
utility function per pool, since different pools are expected to provide different profits (e.g., 
intercity versus regional lines, or rush-hour versus late-evening lines). Moreover, it seems
more natural to assume that each LOP has a different utility function per pool that depends on
the frequency she gets for that pool, rather than a single utility function that depends on the
total frequency she gets across all pools.

In this work, we continue this line of research by further investigating the multiple pool
case. In particular, we make the following contributions:
(1) Contrary to the approach in \cite{BKZ2009}, we consider the case where
each LOP has a different utility function for each line pool she is interested in, and show
how the approach in \cite{BKZ2009} can be extended in order to provide a mechanism 
for this case, too.
(2) We conduct an experimental study on a discrete variant of the new mechanism on both synthetic
and real-world data demonstrating its fast convergence to the social optimum.
(3) We conduct an additional experimental study, on both synthetic
and real-world data, to investigate the robustness of the system in
the case of disruptions that affect the available capacity, which may be
reduced (due to temporary unavailability of part of the network), or
increased (by allowing usage of additional infrastructure during certain busy
periods). In this case, we show that the NOP can re-converge (recover) the system to
the social optimum pretty fast, starting from a previous optimal solution.

The rest of the paper is organized as follows. In Section~\ref{sec:mlp},
we present the mechanism for the case of multiple line pools, where each
LOP has a different utility function per pool. In Sections~\ref{sec:exp-mlp}
and \ref{sec:disruptions}, we present the main contributions of this work
for the multiple line pool case; namely, the experimental study showing the
fast convergence of the new mechanism, and the robustness
of the system in the case of disruptions, respectively.
We conclude in Section~\ref{sec:conclusions}.

\section{Multiple Line Pools: Different Utility Functions per Pool}
\label{sec:mlp}


The exposition in this section follows that in \cite{BKZ2009}.
Although the analysis of our approach is based on the same
methodology as that in \cite{BKZ2009}, 
we have chosen to present all the details for making the paper
self-contained and for the better understanding of the experimental studies.

In the \emph{line planning} problem, the NOP provides the public transportation
network in the form of a directed graph $G = (V, L)$, where $V$ is the
node set representing train stations and important railway junctions,
and $L$ is the edge set representing direct connections or links (of
railway tracks) between elements of $V$. Each edge $\ell \in L$ is
associated with a capacity $c_{\ell} > 0$, which limits the
number of trains that can use this edge in the period examined.
A line $p$ is a path in $G$.

For our purposes, we assume that there is set $K$ of line pools, where each pool represents a different set of possible routes (each corresponding to a different period of the day). We envision the line pools to be implemented in disjoint time intervals (e.g., via some sort of time division multiplexing), and also to concern different characteristics of the involved lines (e.g., high-speed pool, regular-speed pool, local-trains pool, rush-hour pool, night-shift pool, etc.). The capacity of each resource (edge) refers to its usage (number of trains) over the whole time period we consider
(e.g., a day), and if a particular pool consumes (say) 50\% of the whole
infrastructure, then this implies that for all the lines in this pool,
each resource may exploit at most half of its capacity. It is up to the NOP to determine how to split a whole operational period of the railway infrastructure among the different pools, so that (for the whole period) the resource capacity constraints are not violated.

There is also a set $P$ of LOPs, who choose their lines from $K$. We assume that
each LOP $p \in P$ is interested only in one line in each pool
(we can always enforce this assumption by considering a LOP interested
in more than one routes as different LOPs distinguished by the specific route).
Each line pool and the preference of LOPs to lines in it are represented by a
\emph{routing matrix} $\mm{R}(k) \in \{0, 1\}^{|L| \times |P|}, k \in K$. Each row $\matrixrow{R}{\ell}(k)$ corresponds to a different edge $\ell \in L$,
and each column $\matrixcol{R}{p}(k)$ corresponds to a different
LOP $p \in P$, showing which edges comprise her line in pool $k$.

Each LOP $p \in P$ acquires a \emph{frequency} of trains that she wishes to route over her paths in $\matrixcol{R}{p}(k), k \in K$, such that no edge capacity constraint is violated by the aggregate frequency running through it by all LOPs and pools. A utility function $U_{p,k} \colon \nonnegativereals \mapsto \nonnegativereals$ determines the \emph{level of satisfaction} of LOP $p \in P$ in pool $k \in K$ for being given an end-to-end frequency $x_{p,k} > 0$.
Having different utility functions per pool, instead of a single utility function across all pools, is more generic and hence more realistic, since a LOP $p$ can indeed have different valuations for different periods of a day (rush-hour pool vs night-shift pool) and/or different types of trains (high-speed pool vs local-trains pool). These utility functions are assumed to be strictly increasing, strictly
concave, non-negative real functions of the end-to-end frequency
$x_{p,k}$ allocated to LOP $p \in P$ in pool $k \in K$.

The aggregate satisfaction level $U_p$ of LOP $p \in P$ across all
pools is given by the sum of the individual gains she has in
each pool, that is
\[
U_p (\vector{x}_p) = U_p (x_{p,1},\ldots,x_{p,k}) = \sum_{k \in K} U_{p,k} (x_{p,k}), \quad \forall p \in P
\]
where $\vector{x}_p =  (x_{p,k})_{k \in K}$ is the vector of frequencies that $p$ gets for all the pools. The utility functions are \emph{private} to the LOP; she is not willing to share them for competitiveness reasons, not even with the NOP. This has a few implications on the necessary approach to handle the problem.

The NOP, on the other hand, wishes to allocate to each LOP a frequency
vector $\hvec{x}_p = \sum_{k \in K} \hat{x}_{p,k}$ such that the cumulative satisfaction of 
all the LOPs is maximized, while respecting all the edge capacity constraint. To achieve this, 
the NOP divides the whole railway infrastructure to the pools, using variables $f_k, k \in K$ that determine the
proportion of the total capacity of the edges that is assigned to pool
$k$. Hence, the NOP wishes to solve the following strictly convex optimization
problem:
 \begin{equation}
  \label{eq:MSC-diffU}
  \tag{MSC-II}
  \begin{aligned}
    \max \quad & \sum_{p \in P}U_p(\vector{x}_p) = \sum_{p \in P} \sum_{k \in
      K}U_{p,k} \left(x_{p,k} \right)\\
    s.t. \quad & \sum_{p \in P}R_{\ell, p}(k) \cdot x_{p,k} \le c_{\ell}
    \cdot f_k, \, \forall (\ell, k) \in L \times K\\
    & \sum_{k \in K} f_k \le 1\\
    & \mv{x}, \mv{f} \ge 0
  \end{aligned}
\end{equation}
\REMOVED{
The problem is convex, since it is a maximization problem and the
objective function is the sum of concave functions as well as because
the feasible space is linear (we make the assumption that a convex
problem is a minimization problem with a convex objective function,
which is equivalent with a maximization problem with a concave
objective function).
}
Clearly, the NOP cannot solve this problem directly for (at least) two reasons: (i) the utility functions are unknown to him; (ii) the scale of the problem can be too large (as it is typical with railway networks) so that it can be solved efficiently via a centralized computation. The latter is particularly important when the whole system is already at some equilibrium state and then suddenly a (small, relative to the size of the whole problem) perturbation in the problem parameters occurs. Then, rather than having a whole new re-computation of the new optimal solution from scratch, it is particularly desirable that a dynamical scheme allows convergence to the new optimal solution, starting from this worm start (of the previously optimal solution). All the above reasons dictate searching for a different solution approach, that has to be as decentralized as possible.


\paragraph{\textbf{Optimal Solution.}}
We adopt the approach in \cite{BKZ2009} to design a mechanism that will be
run by the NOP in order to solve the above problem. In particular, rather than having the NOP directly deciding for the frequencies of all the LOPs in each pool, we first let each LOP make her own bid for frequency in each pool. Then, the NOP consider the solution of a convex program which is similar, but not identical to (\ref{eq:MSC-diffU}), using a set of (strictly increasing, strictly concave) pseudo-utilities. Our goal is to exploit the rational (competitive) behavior of the LOPs, in order to assure that eventually the optimal solution reached for this new program is identical to that of (\ref{eq:MSC-diffU}), as required.

We start by describing the necessary and sufficient optimality (KKT) conditions for (\ref{eq:MSC-diffU}). We study the Lagrangian function and exploit the economic interpretation of the Lagrangian multipliers. Let $\vector{\La}$
be the vector of Lagrangian multipliers for the resource capacity constraints,
and $\zeta$ the Lagrangian multiplier concerning the constraint for the
capacity proportions per pool. The Lagrangian function of (\ref{eq:MSC-diffU})
is as follows:
\begin{equation*}
  \label{eq:lagrange-MNET-II}
  \begin{aligned}
    & L(\mv{x}, \mv{f}, \mv{\Lambda}, \zeta) = \\
    & \sum_{p\in P}\sum_{k \in K} U_{p,k}(x_{p,k}) - \sum_{\ell \in
      L}\sum_{k\in K} \Lambda_{\ell,k}\cdot \left[ \sum_{p\in
        P}R_{\ell,p}(k) \cdot x_{p,k} - c_{\ell}\cdot f_k \right] -
    \zeta\left[\sum_{k\in K}
      f_k - 1\right] =\\
    & \sum_{p\in P}\sum_{k \in K}\left[ U_{p,k}(x_{p,k}) - x_{p,k}
      \left( \sum_{\ell\in L} \Lambda_{\ell,k}\cdot R_{\ell,p}(k)
      \right) \right] + \sum_{k\in K}
    f_k\cdot\left[\mv{c}^T\matrixcol{\Lambda}{k} - \zeta \right] +
    \zeta =\\
    & \sum_{p\in P}\sum_{k \in K} \left[ U_{p,k}(x_{p,k}) - x_{p,k}
      \cdot \mu_{p,k}(\mv{\Lambda}) \right] + \sum_{k\in K} f_k \cdot
    \left[\mv{c}^T\matrixcol{\Lambda}{k} - \zeta \right] + \zeta,
  \end{aligned}
\end{equation*}
where $\mu_{p,k}(\mv{\Lambda}) \equiv \sum_{\ell \in L}\Lambda_{\ell,k} \cdot R_{\ell,p}(k)$. We interpret $\Lambda_{\ell,k}$ to be the per-unit-of-frequency price of edge $\ell \in L$ in pool $k \in K$. Thus, $\mu_{p,k}(\mv{\Lambda})$ is the end-to-end per-unit price that LOP $p$ has to pay for pool $k$.

The optimality (KKT) conditions for (\ref{eq:MSC-diffU}) are the following:
\begin{align}
  \forall (p,k)\in P\times K, \;
  U'_{p,k}(\hat{x}_{p,k}) & = \hat{\mu}_{p,k} \equiv
  \mu_{p,k}(\hvec{\Lambda})
  \tag{KKT-MSC-II.a}\label{eq:selfish-KKT-MSC-2} \\
  \mv{c}^T\cdot \matrixcol{\hat{\Lambda}}{k} \equiv \sum_{\ell\in L}
  \hat{\Lambda}_{\ell,k}\cdot c_{\ell} & = \hat{\zeta}, \; \forall
  k\in K \tag{KKT-MSC-II.b}\label{eq:zeta-KKT-MSC-2}\\
  \hat{\Lambda}_{\ell,k} \left[\sum_{p\in P}R_{\ell,p}(k)
    \hat{x}_{p,k} - c_\ell \hat{f}_k \right] & = 0, \; \forall
  (\ell,k)\in L \times K
  \tag{KKT-MSC-II.c}\label{eq:complementarity1-KKT-MSC-2} \\
  \hat{\zeta}\cdot\left(\sum_{k\in K} \hat{f}_k - 1\right) & = 0
  \tag{KKT-MSC-II.d}\label{eq:complementarity2-KKT-MSC-2}\\
  \sum_{p\in P} R_{\ell,p}(k)\cdot \hat{x}_{p,k} & \leq
  c_\ell\cdot\hat{f}_k, \; \forall (\ell,k)\in L\times K
  \tag{KKT-MSC-II.e} \\
  \sum_{k\in K} \hat{f}_k & \leq 1\tag{KKT-MSC-II.f}\\
  \hvec{x}, \hvec{f}, \hvec{\Lambda}, \hat{\zeta} & \geq \mv{0} \nonumber
\end{align}
These conditions characterize the optimal solution of \eqref{eq:MSC-diffU}. From these equations, one can observe for the optimal solution $(\hvec{x},\hvec{f})$ and its accompanying vector of Lagrangian multipliers $(\hvec{\La},\hat{\zeta})$:
\begin{enumerate}

\item By equation~(\ref{eq:zeta-KKT-MSC-2}) all pools (at optimality) have the same (weighted with the capacities of the edges) aggregate per-unit-of-frequency cost $\hat{\zeta}$.

\item By equation~(\ref{eq:complementarity2-KKT-MSC-2}) either
  $\hat{\zeta} = 0$ or $\sum_{k \in K}\hat{f}_k = 1$. The former case
  (all prices are set to zero)
  cannot actually happen, since otherwise all the pools would have zero aggregate cost (i.e., every edge is underutilized in every pool). Then we could slightly increase the frequencies of all the LOPs in all the pools by a positive constant and keep exactly the same vector of proportions $\hvec{f}$, which would lead to a strictly better solution (the utility functions are strictly increasing), contradicting the optimality of $\hvec{x}$. Hence, we consider only the case, where at least some resources have positive costs. In this case, the condition $\sum_{k \in K}\hat{f}_k = 1$ must hold. Any vector \mv{f} for which this condition holds is said to \emph{completely divide} the infrastructure among the pools.

\end{enumerate}
%

\paragraph{\textbf{Limited Information and Decentralized Mechanism.}}
As already mentioned, the NOP cannot know the utility functions. Therefore, a slightly different problem is considered: Each LOP $p\in P$ announces (non-negative) bids $w_{p,k}\geq 0$ (one per pool), which she is committed to spend for acquiring frequencies in the pools. Consequently, the NOP replaces
the unknown utility functions with the pseudo-utilities $w_{p,k}\log(x_{p,k})$ in order to determine a frequency vector that maximizes the aggregate level of pseudo-satisfaction. Observe that these used pseudo-utilities are also strictly increasing, strictly concave functions of the LOPs' frequencies. That means, that NOP wishes to solve the following (strictly convex) optimization problem that is completely known to him:
\begin{equation}
  \label{eq:MNET-II}
  \tag{MNET-II}
  \begin{aligned}
    \max \quad & \sum_{p\in P}\sum_{k \in K} w_{p,k}\log(x_{p,k})\\
    s.t. \quad & \sum_{p\in P}R_{\ell,p}(k) \cdot x_{p,k} \leq
    c_{\ell}\cdot f_k, \; \forall (\ell,k)\in L\times K\\
    & \sum_{k\in K} f_k \leq 1 \\
    & \mv{x}, \mv{f} \ge \mv{0}.
  \end{aligned}
\end{equation}
The optimal solution vector $(\bvec{x},\bvec{f})$, along with the corresponding vector of Lagrange multipliers $(\bvec{\La},\bar{\zeta})$, must satisfy the following necessary and sufficient optimality (KKT) conditions for \eqref{eq:MNET-II}:
\begin{align*}
  \forall (p,k)\in P\times K, \; \frac{w_{p,k}}{\bar{x}_{p,k}} & =
  \bar{\mu}_{p,k} \equiv \mu_{p,k}(\bvec{\Lambda})
  \tag{KKT-MNET-II.a}\label{eq:selfish-KKT-MNET-2} \\
  \mv{c}^T\cdot \matrixcol{\bar{\Lambda}}{k} \equiv \sum_{\ell\in L}
  \bar{\Lambda}_{\ell,k}\cdot c_{\ell} & = \bar{\zeta}, \; \forall
  k\in K \tag{KKT-MNET-II.b}\label{eq:zeta-KKT-MNET-2}\\
  \bar{\Lambda}_{\ell,k} \left[\sum_{p\in P}R_{\ell,p}(k)
    \bar{x}_{p,k} - c_\ell \bar{f}_k \right] & = 0, \; \forall
  (\ell,k)\in L \times K
  \tag{KKT-MNET-II.c}\label{eq:complementarity1-KKT-MNET-2} \\
  \bar{\zeta}\cdot\left(\sum_{k\in K} \bar{f}_k - 1\right) & = 0
  \tag{KKT-MNET-II.d} \label{eq:complementarity2-KKT-MNET-2}\\
  \sum_{p\in P} R_{\ell,p}(k)\cdot \bar{x}_{p,k} & \leq
  c_\ell\cdot\bar{f}_k, \; \forall (\ell,k)\in L\times K
  \tag{KKT-MNET-II.e} \\
  \sum_{k\in K} \bar{f}_k & \leq 1  \tag{KKT-MNET-II.f}\\
  \bvec{x}, \bvec{f}, \bvec{\Lambda}, \bar{\zeta} & \geq \mv{0}
  \nonumber\\
\end{align*}

This problem can of course be solved in polynomial time, given the bid of the
LOPs $\vector{w} = (w_{p,k})_{(p,k)\in P\times K}$. Equation~\eqref{eq:selfish-KKT-MNET-2} means that the NOP (correctly) assigns
frequency $\bar{x}_{p,k} = \frac{w_{p,k}}{\bar{\mu}_{p,k}}$, which is affordable at the (path) per-unit price $\bar{\mu}_{p,k}$ for LOP $p$ who is committed to afford an amount of $w_{p,k}$ for acquiring frequency in pool $k$. As already said, $(\bvec{x},\bvec{f})$ is the optimal solution for any bid vector declared by the LOPs, and in particular it also holds for the true bid vector that the LOPs would really wish to afford.

Observe, also, that problems (KKT-MSC-II) and (KKT-MNET-II) differ
only in equations (\ref{eq:selfish-KKT-MSC-2}) and (\ref{eq:selfish-KKT-MNET-2}). Our next step is to somehow assure that these two conditions coincide. To this direction, we exploit the rational behavior of the LOPs: Each LOP wishes to maximize her own aggregate level of satisfaction, therefore, she would declare a bid vector that would actually achieve this. In what follows, we assume that the LOPs are \emph{price takers} meaning that each of them considers the prices announced by the NOP as constants, with no hope of affecting them by their own bid vector. This property is important in the following analysis, and is realistic when there exist many LOPs, each controlling only negligible fractions of the total flow (or bidding process) in the system. We now state and prove an important theorem for the existence of a
mechanism for this problem.

\begin{theorem}
  Given a transportation network $G = (V, L)$, a set of line pools $K$
  and a set $P$ of selfish, price-taking LOPs, each having a private
  utility function for each pool with parameter the frequency that is
  allocated to her in the particular pool, there is a mechanism (a pair
  of a frequency allocation mechanism and a resource pricing scheme) that computes in polynomial time the optimal solution of the sum of the utility functions of the players, while respecting the capacities of the edges.
\end{theorem}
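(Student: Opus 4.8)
The plan is to exploit the observation, already made above, that the KKT systems of \eqref{eq:MSC-diffU} and \eqref{eq:MNET-II} are identical except for their first lines, \eqref{eq:selfish-KKT-MSC-2} versus \eqref{eq:selfish-KKT-MNET-2}. If I can produce a bid vector $\mv{w}$ for which the (unique) optimal frequencies of \eqref{eq:MNET-II} also satisfy \eqref{eq:selfish-KKT-MSC-2}, then those frequencies satisfy the entire KKT system of \eqref{eq:MSC-diffU}; since \eqref{eq:MSC-diffU} is strictly convex, the KKT conditions are sufficient, so the \eqref{eq:MNET-II}-optimum coincides with the social optimum. The mechanism is then defined as follows: given the declared bids, the NOP solves \eqref{eq:MNET-II}, announces the per-unit edge prices $\Lambda_{\ell,k}$ (equivalently the path prices $\mu_{p,k}=\sum_{\ell}\Lambda_{\ell,k}R_{\ell,p}(k)$) read off from the Lagrange multipliers, and allocates $x_{p,k}=w_{p,k}/\mu_{p,k}$.

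First I would model the rational response of a price-taking LOP. Facing fixed per-unit path prices $\mu_{p,k}$, LOP $p$ chooses her bids, equivalently her frequencies $x_{p,k}=w_{p,k}/\mu_{p,k}$, so as to maximize her net surplus $\sum_{k\in K}[U_{p,k}(x_{p,k})-\mu_{p,k}\,x_{p,k}]$. Because each $U_{p,k}$ is strictly concave and the payment term is linear, this problem is strictly concave with a unique maximizer, characterized by the first-order condition $U'_{p,k}(x_{p,k})=\mu_{p,k}$ for every pool $k$ --- which is exactly \eqref{eq:selfish-KKT-MSC-2}. Hence a selfish, price-taking LOP, at the prices she is charged, spontaneously reproduces the one optimality condition of \eqref{eq:MSC-diffU} that is missing from \eqref{eq:MNET-II}.

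Next I would close the loop with a consistency (fixed-point) argument built around the social optimum. Starting from an optimal solution $(\hvec{x},\hvec{f})$ of \eqref{eq:MSC-diffU} with multipliers $(\hvec{\La},\hat{\zeta})$, I would set the bids to $w_{p,k}:=\hat{\mu}_{p,k}\,\hat{x}_{p,k}=\hat{x}_{p,k}\,U'_{p,k}(\hat{x}_{p,k})$ and verify that $(\hvec{x},\hvec{f},\hvec{\La},\hat{\zeta})$ satisfies every KKT condition of \eqref{eq:MNET-II}. Indeed, conditions \eqref{eq:zeta-KKT-MNET-2}, \eqref{eq:complementarity1-KKT-MNET-2}, \eqref{eq:complementarity2-KKT-MNET-2}, together with the feasibility and sign conditions, are literally the same as those of \eqref{eq:MSC-diffU}, while \eqref{eq:selfish-KKT-MNET-2} becomes $w_{p,k}/\hat{x}_{p,k}=\hat{\mu}_{p,k}$, which holds by the very definition of $w_{p,k}$. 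By sufficiency of the KKT conditions for strictly convex \eqref{eq:MNET-II}, the NOP's program returns precisely the social optimum at this bid vector; moreover the vector is self-consistent, since the prices $\hat{\mu}_{p,k}$ to which the LOPs best-respond are exactly the ones the NOP recomputes, and by the surplus argument of the previous paragraph each LOP is indeed best-responding when declaring $w_{p,k}$.

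Finally, polynomial running time follows because \eqref{eq:MNET-II} is a convex program with a separable logarithmic objective and linear constraints, solvable to optimality in polynomial time, after which reading off $\mu_{p,k}$ and the allocation $x_{p,k}=w_{p,k}/\mu_{p,k}$ is immediate. The step I expect to be most delicate is the consistency argument: I must exclude the degenerate all-zero-price regime (ruled out above, since strictly increasing utilities force $\sum_{k}\hat{f}_k=1$ and hence $\hat{\zeta}>0$) and cope with the non-uniqueness of $\mv{f}$ --- the objective of \eqref{eq:MNET-II} is strictly concave in $\mv{x}$ but only linear in $\mv{f}$ --- so that the correspondence between the two programs is asserted at the level of the frequency vector $\mv{x}$ and the induced prices, where it is genuinely unique.
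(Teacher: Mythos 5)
Your proposal is correct and follows essentially the same route as the paper: both exploit the observation that the KKT systems of \eqref{eq:MSC-diffU} and \eqref{eq:MNET-II} differ only in \eqref{eq:selfish-KKT-MSC-2} versus \eqref{eq:selfish-KKT-MNET-2}, and let the price-taking LOPs' selfish surplus maximization (the paper's problem \eqref{eq:muser-II}) supply the missing first-order condition $U'_{p,k}(\bar{x}_{p,k}) = \bar{\mu}_{p,k}$. Your explicit fixed-point verification at the bids $w_{p,k} = \hat{\mu}_{p,k}\,\hat{x}_{p,k}$, the polynomial-time remark, and the attention to $\hat{\zeta} > 0$ and the non-uniqueness of $\mv{f}$ are welcome tightenings of steps the paper treats only implicitly, not a different argument.
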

\begin{proof}
  We create the following mechanism. The NOP completely divides the
  infrastructure and, depending on the bids of the LOPs, announces the
  optimal frequencies $\bar{x}_{p,k}$ for each LOP $p \in P$ and the current
  resource (edge) prices $\bar{\Lambda}_{\ell, k}, \forall \ell \in L, k \in
  K$. We have that, $\bar{x}_{p,k} = \frac{w_{p,k}}{\bar{\mu}_{p,k}}$. Each (selfish) LOP is interested in solving the following strictly convex optimization problem:
  \begin{equation}
    \label{eq:muser-II}
    \tag{MUSER-II}
  \begin{aligned}
    \max \quad & \sum_{k \in K}\left(U_{p,k}(\bar{x}_{p,k}) - w_{p,k}\right)
    = \sum_{k \in
      K}\left(U_{p,k}\left(\frac{w_{p,k}}{\bar{\mu}_{p,k}}\right) -
      w_{p,k}\right)\\
    s.t. \quad & \hspace{8em} w_{p,k} \ge 0, \; \forall k \in K
  \end{aligned}
  \end{equation}
  That is, she tries to maximize the utility she gets over the frequency
  she will be granted by the NOP, after subtracting the bid she has to pay for this frequency.

  Due to the price taking property of LOP $p$, we can obtain the solution of the \eqref{eq:muser-II} by differentiating the objective function and setting every partial derivative equal to zero:
  \begin{equation*}
    \frac{\partial}{\partial w_{p,k}} \sum_{k \in K} \left(
      U_{p,k} \left( \frac{w_{p,k}}{\bar{\mu}_{p,k}}\right) - w_{p,k} \right)
    = 0, \; \forall k \in K
  \end{equation*}
  we get
  \begin{gather*}
    U'_{p,k}\left(\frac{w_{p,k}}{\bar{\mu}_{p,k}}\right) \cdot
    \frac{1}{\bar{\mu}_{p,k}} - 1 = 0 \Rightarrow
    U'_{p,k}\left(\bar{x}_{p,k}\right) = \bar{\mu}_{p,k}
  \end{gather*}

  However, this is the set of equations \eqref{eq:selfish-KKT-MSC-2}
  of the KKT conditions for the (MSC-II) problem. That is, every LOP
  chooses the bids in a way that matches the equation
  \eqref{eq:selfish-KKT-MSC-2}. But this will happen in the optimal
  solution of (MNET-II) as well. Therefore, the selfish choice of bids by the LOPs enforces the identification of the optimal solution of MSC-II and MNET-II. I.e., we have managed to delegate the problem of choosing the optimal bids in each iteration to the LOPs and, in doing so, we have matched
  the first equation of the optimality conditions of MSC-II.
\qed
\end{proof}

This polynomially tractable mechanism (based on the solvability of MNET-II) is totally centralized and rather inconvenient for a dynamically changing (over time), large-scale railway system. The following lemma (whose proof can be found in the Appendix) is crucial in deriving a dynamical system for solving \eqref{eq:MSC-diffU}.

\begin{lemma}
\label{lm:main}
  For any (fixed) vector \mv{f} of capacity proportions that completely divides the railway infrastructure among the pools, the optimal value of \eqref{eq:MSC-diffU} exclusively depends on the optimal vector \bvec{\Lambda} of the per-unit-of-frequency prices of the resources.
\end{lemma}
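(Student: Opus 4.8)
The plan is to show that, once $\mv{f}$ is frozen, every optimal frequency of \eqref{eq:MSC-diffU} is a deterministic function of the optimal price vector, so that the optimal objective value depends on those prices alone. First I would fix a vector $\mv{f}$ that completely divides the infrastructure (so $\sum_{k\in K} f_k = 1$) and read \eqref{eq:MSC-diffU} as a strictly concave maximization over $\mv{x}\ge\mv{0}$ subject only to the linear capacity constraints $\sum_{p\in P} R_{\ell,p}(k)\,x_{p,k}\le c_\ell f_k$. Since $\mv{f}$ is now a constant, the pool-proportion constraint and its multiplier $\zeta$ play no role in the stationarity analysis, and the optimality conditions collapse to \eqref{eq:selfish-KKT-MSC-2} together with the complementarity and feasibility conditions for the capacity constraints.

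The key step is to invert the marginal-utility relation. By assumption each $U_{p,k}$ is strictly increasing and strictly concave, so $U'_{p,k}$ is continuous and strictly decreasing and therefore admits a well-defined inverse $(U'_{p,k})^{-1}$ on its range. Solving \eqref{eq:selfish-KKT-MSC-2} for the frequency then gives
\[
\hat{x}_{p,k} = (U'_{p,k})^{-1}\!\big(\hat{\mu}_{p,k}\big) = (U'_{p,k})^{-1}\!\Big(\textstyle\sum_{\ell\in L}\hat{\Lambda}_{\ell,k}\,R_{\ell,p}(k)\Big),
\]
so each optimal frequency is expressed purely through the per-path price $\hat{\mu}_{p,k}=\mu_{p,k}(\hvec{\Lambda})$, which is itself a fixed linear functional of $\hvec{\Lambda}$ determined by the (constant) routing matrices.

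Substituting these expressions back into the objective yields the optimal value $\sum_{p\in P}\sum_{k\in K} U_{p,k}\big((U'_{p,k})^{-1}(\mu_{p,k}(\hvec{\Lambda}))\big)$, which manifestly depends on $\hvec{\Lambda}$ alone for the fixed $\mv{f}$. Finally I would invoke the preceding theorem: since the selfish, price-taking bids force the optima of \eqref{eq:MSC-diffU} and \eqref{eq:MNET-II} to coincide, we have $\hvec{\Lambda}=\bvec{\Lambda}$ (and $\hvec{x}=\bvec{x}$), whence the optimal value of \eqref{eq:MSC-diffU} is a function of the MNET-II price vector $\bvec{\Lambda}$ only. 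The main obstacle I anticipate is the well-definedness of the inversion at the boundary: if some per-path price $\hat{\mu}_{p,k}$ were to vanish, then $0$ would have to lie in the range of $U'_{p,k}$ and the inverted frequency would be unbounded, contradicting the capacity constraints. I would dispose of this exactly as in the remark following \eqref{eq:complementarity2-KKT-MSC-2}: at optimality not all prices can be zero, and the active capacity constraints keep the relevant $\hat{\mu}_{p,k}$ strictly positive, so $(U'_{p,k})^{-1}$ is always evaluated strictly inside the range of $U'_{p,k}$.
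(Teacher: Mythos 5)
Your proposal is correct in substance, but it takes a genuinely different route from the paper's own proof. You argue on the primal side: with $\mv{f}$ frozen, you use the stationarity condition \eqref{eq:selfish-KKT-MSC-2} together with strict concavity (so that $U'_{p,k}$ is continuous, strictly decreasing, hence invertible on its range) to express each optimal frequency as $\hat{x}_{p,k}=(U'_{p,k})^{-1}\big(\mu_{p,k}(\hvec{\Lambda})\big)$, and substituting back into the objective exhibits the optimal value as a function of the price vector alone. The paper instead works on the dual side: it forms the Lagrangian dual $D(\mv{\Lambda},\zeta)$ of \eqref{eq:MSC-diffU}, splits it into a part $P_1(\mv{\Lambda})$ (a maximization over $\mv{x}\ge\mv{0}$ involving only the path prices $\mu_{p,k}(\mv{\Lambda})$) and a part $P_2(\mv{\Lambda},\zeta)$ (a maximization over $\mv{f}\ge\mv{0}$), and then uses the complementarity $\zeta\cdot\left(1-\sum_{k\in K}f_k\right)=0$ to collapse the latter to a linear program over the simplex, namely $P_2=\max_{k\in K}\ \mv{c}^T\cdot\matrixcol{\Lambda}{k}$, again a function of $\mv{\Lambda}$ alone. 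Both arguments establish the lemma, but they buy different things: the paper's dual decomposition identifies the $\mv{f}$-block of the dual with the \emph{maximum pool cost}, which is precisely the structural fact motivating the capacity-proportion dynamics \eqref{eq:f_update-2} (drive all pool costs to equality); your primal inversion is more elementary (no duality needed) and makes the price-to-frequency map explicit, which is what the single-pool subroutine actually computes. Two caveats on your version: first, \eqref{eq:selfish-KKT-MSC-2} written as an equality presupposes interior optima --- at a corner $\hat{x}_{p,k}=0$ the correct condition is $U'_{p,k}(0)\le\hat{\mu}_{p,k}$, so strictly speaking your inverted map should be stated as the single-valued extension $\hat{x}_{p,k}=\max\{0,(U'_{p,k})^{-1}(\hat{\mu}_{p,k})\}$; you flag the boundary issue and the paper itself states the KKT condition in equality form, so this is a presentational rather than substantive gap. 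Second, the closing appeal to the preceding theorem to identify $\hvec{\Lambda}$ with $\bvec{\Lambda}$ is unnecessary overhead: in the lemma the bar notation simply denotes the optimal multipliers of the problem at hand, so your substitution argument already completes the proof without routing through the MSC-II/MNET-II equivalence.
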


\REMOVED{
\begin{proof}
  The Lagrangian function of the problem is:
  \begin{equation*}
    L(\mv{x}, \mv{f}, \mv{\Lambda}, \zeta) = \\
    \sum_{p\in P}\sum_{k \in K} \left[ U_{p,k}(x_{p,k}) - x_{p,k}
      \cdot \mu_{p,k}(\mv{\Lambda}) \right] + \sum_{k\in K} f_k \cdot
    \left[\mv{c}^T\matrixcol{\Lambda}{k} - \zeta \right] + \zeta.
  \end{equation*}
  The dual problem of \eqref{eq:MSC-diffU} is the following:
  \begin{equation*}
    \begin{aligned}
      \max \quad & D(\mv{\Lambda}, \zeta)\\
      s.t. \quad & \Lambda_{\ell,k} \ge 0 \quad \forall \ell \in L, \,
      \forall k \in K\\
      & \zeta \ge 0,
    \end{aligned}
  \end{equation*}
  where
  \begin{align*}
    & D(\mv{\Lambda}, \zeta)  =\\
    & = \max\left\{L(\mv{x}, \mv{f}, \mv{\Lambda}, \zeta) \colon
      \mv{x}, \mv{f} \ge \mv{0} \right\} \\
    & = \max_{\mv{x}, \mv{f} \ge \mv{0}} \left\{ \sum_{p \in P}\sum_{k
        \in K}\left[ U_{p,k}(x_{p,k}) - x_{p,k}\sum_{l \in
          L}\Lambda_{\ell, k}R_{\ell, p}(k) \right] + \sum_{k \in
        K}f_k \left[ \sum_{\ell \in L}\Lambda_{\ell,k}c_{\ell} - \zeta
      \right] +
      \zeta \right\}\\
    & = \max_{\mv{x} \ge \mv{0}} \left\{ \sum_{p \in P}\sum_{k \in
        K}\left[ U_p(x_p) - x_{p,k}\mu_{p,k}(\mv{\Lambda}) \right]
    \right\} + \max_{\mv{f} \ge \mv{0}} \left\{ \sum_{k \in K}f_k
      \left[ \sum_{\ell \in L}\Lambda_{\ell, k}c_{\ell} - \zeta
      \right] \right\} + \zeta.
  \end{align*}

  For the part
  \begin{equation*}
    P_1(\mv{\Lambda}) = \max_{\mv{x} \ge \mv{0}} \left\{ \sum_{p \in
        P} \left[ U_p(x_p) - \sum_{k \in K}x_{p,k} \cdot
        \mu_{p,k}(\mv{\Lambda}) \right] \right\}
  \end{equation*}
  we have, that it is a simple maximization problem the solution of
  which depends exclusively on the optimal vector of resource prices
  \bvec{\Lambda}.

  For the part
  \begin{align*}
    P_2(\mv{\Lambda}, \zeta) & = \max_{\mv{f} \ge \mv{0}} \left\{
      \sum_{k \in K} f_k \cdot \left[ \sum_{\ell \in
          L}\Lambda_{\ell,k} \cdot c_{\ell} - \zeta \right] \right\}
    + \zeta\\
    & = \max_{\mv{f} \ge \mv{0}} \left\{ \sum_{k \in K} \cdot \left(
        \mv{c}^T \cdot \matrixcol{\Lambda}{k} \right) + \zeta \cdot
      \left( 1 - \sum_{k \in K}f_k \right) \right\}
  \end{align*}
  we have, that at optimality $\zeta \cdot \left( 1 - \sum_{k \in
      K}f_k \right)$ equals to zero. So, we get:
  \begin{align*}
    P_2(\mv{\Lambda}, \zeta) & = \max_{\mv{1}^T\mv{f} = 1 \colon
      \mv{f} \ge \mv{0}} \left\{ \sum_{k \in K}f_k \cdot \left(
        \mv{c}^T \cdot \matrixcol{\Lambda}{k} \right) \right\}\\
    & = \max_{k \in K} \left\{ \mv{c}^T \cdot \matrixcol{\Lambda}{k}
    \right\} \\
    & = \min \left\{ \mv{z} \colon \mv{z} \cdot \mv{1}^T \ge \mv{c}^T \cdot
      \mv{\Lambda} \right\}.
  \end{align*}
  That is, $P_2(\mv{\Lambda}, \zeta)$ calculates the maximum cost
  among the pools. This, however, depends only on the optimal vector
  of resource prices \bvec{\Lambda}.
\end{proof}

So, to prove the convergence of the above system , we can use the
Lyapunov function
\begin{equation*}
  V(\mm{\Lambda}(t)) = \frac{1}{2} \cdot \left( \mm{\Lambda}(t) -
    \bvec{\Lambda} \right)^T \cdot \left( \mm{\Lambda}(t) -
    \bvec{\Lambda} \right)
\end{equation*}
and show that $\frac{dV(\mv{\Lambda}(t))}{dt} \le 0$.

Indeed,
\begin{align*}
  \frac{dV(\mm{\Lambda}(t))}{dt} & =\\
  & = \sum_{\ell\in L}\sum_{k\in K} \left(
    \Lambda_{\ell,k}(t)-\bar{\Lambda}_{\ell,k} \right) \cdot
  \dot{\Lambda}(t) \\
  & = \sum_{\ell\in L}\sum_{k\in K} \left(
    \Lambda_{\ell,k}(t)-\bar{\Lambda}_{\ell,k} \right) \cdot \left[
    \max\left\{0, y_{\ell,k}(t) - c_{\ell}\bar{f}_k \right\} \cdot
    \Ind{\Lambda_{\ell,k}(t)=0} \right. \\
  & \left.\hspace*{5cm} + \left(y_{\ell,k}(t) -
      c_{\ell}\bar{f}_k\right) \cdot \Ind{\Lambda_{\ell,k}(t)>0}
  \right]\\
  & \leq \sum_{\ell\in L}\sum_{k\in K} \left( \Lambda_{\ell,k}(t) -
    \bar{\Lambda}_{\ell,k} \right) \cdot \left[ y_{\ell,k}(t) -
    c_{\ell} \bar{f}_k \right] \quad\text{(from
    \ref{inequality4})} \\
  & = \sum_{\ell\in L}\sum_{k\in K} \left( \Lambda_{\ell,k}(t) -
    \bar{\Lambda}_{\ell,k} \right) \cdot \left[ y_{\ell,k}(t) -
    \bar{y}_{\ell,k} + \bar{y}_{\ell,k} - c_{\ell} \bar{f}_k
  \right]\\
  & \leq \sum_{\ell\in L}\sum_{k\in K} \left( \Lambda_{\ell,k}(t) -
    \bar{\Lambda}_{\ell,k} \right) \cdot \left[ y_{\ell,k}(t) -
    \bar{y}_{\ell,k} \right] \quad\text{(from
    \ref{inequality5})}\\
  & = \sum_{\ell\in L}\sum_{k\in K} \left( \Lambda_{\ell,k}(t) -
    \bar{\Lambda}_{\ell,k} \right) \cdot \sum_{p\in P} R_{\ell,p}(k)
  \cdot \left[ x_{p,k}(t) - \bar{x}_{p,k} \right]\\
  & = \sum_{p\in P}\sum_{k\in K} \left[ x_{p,k}(t) - \bar{x}_{p,k}
  \right] \cdot \left( \mu_{p,k}(t) - \bar{\mu}_{p,k} \right)\\
  & = \sum_{p\in P}\sum_{k\in K} \left[ x_{p,k}(t) - \bar{x}_{p,k}
  \right] \cdot \left( \frac{\bar{w}_{p,k}}{x_{p,k}(t)} -
    \frac{\bar{w}_{p,k}}{\bar{x}_{p,k}} \right)\\
  & = \sum_{p\in P}\sum_{k\in K}\bar{w}_{p,k} \cdot \left[2 -
    \frac{x_{p,k}(t)}{\bar{x}_{p,k}} -
    \frac{\bar{x}_{p,k}}{x_{p,k}(t)} \right]\\
  & \leq 0 \quad\text{(from \ref{inequality6})}
\end{align*}

In the above, we have used the following inequalities:
\begin{itemize}
\item \label{inequality4} The following inequality holds:
  \begin{align*}
    & \sum_{\ell\in L}\sum_{k\in K} \left(
      \Lambda_{\ell,k}(t)-\bar{\Lambda}_{\ell,k} \right) \cdot \left[
      \max\left\{0, y_{\ell,k}(t) - c_{\ell}\bar{f}_k \right\} \cdot
      \Ind{\Lambda_{\ell,k}(t)=0} \right. \\
    & \left.\hspace*{5cm} + \left(y_{\ell,k}(t) -
        c_{\ell}\bar{f}_k\right) \cdot \Ind{\Lambda_{\ell,k}(t)>0}
    \right]\\
    & \leq \sum_{\ell\in L}\sum_{k\in K} \left( \Lambda_{\ell,k}(t) -
      \bar{\Lambda}_{\ell,k} \right) \cdot \left[ y_{\ell,k}(t) -
      c_{\ell}
      \bar{f}_k \right]\\
  \end{align*}
  This is clearly true $ \forall (\ell, k) \colon \Lambda_{\ell,k}(t)
  > 0$, αλλά και $ \forall (\ell, k) \colon \Lambda_{\ell,k}(t) =
  0$. The last part holds, because it will either be $y_{\ell,k}(t) -
  c_{\ell}\bar{f}_{k} \ge 0$ (then it is obvious) or
  $y_{\ell,k}(t) - c_{\ell}\bar{f}_k < 0$. In this case:
  \begin{equation*}
    \left( \Lambda_{\ell,k}(t) - \bar{\Lambda}_{\ell,k} \right) \cdot
    \max\left\{0, y_{\ell,k}(t) - c_{\ell}\bar{f}_k \right\} = -
    \bar{\Lambda}_{\ell,k} \cdot 0 < -\bar{\Lambda}_{\ell,k} \cdot
    \left[ y_{\ell,k}(t) - c_{\ell}\bar{f}_{k} \right].
  \end{equation*}
\item \label{inequality5} The following inequality holds:
  \begin{align*}
    & \sum_{\ell\in L}\sum_{k\in K} \left( \Lambda_{\ell,k}(t) -
      \bar{\Lambda}_{\ell,k} \right) \cdot \left[ y_{\ell,k}(t) -
      \bar{y}_{\ell,k} + \bar{y}_{\ell,k} - c_{\ell} \bar{f}_k
    \right]\\
    & \leq \sum_{\ell\in L}\sum_{k\in K} \left( \Lambda_{\ell,k}(t) -
      \bar{\Lambda}_{\ell,k} \right) \cdot \left[ y_{\ell,k}(t) -
      \bar{y}_{\ell,k} \right]\\
  \end{align*}
  The second inequality holds, because for the optimal matrix of
  resource prices \bvec{\Lambda} (for given \bvec{w} and \bvec{f}) it
  holds, that
  \begin{equation*}
    \sum_{\ell \in L}\sum_{k \in K} \bar{\Lambda}_{\ell,k}
    (\bar{y}_{\ell,k} - c_{\ell}\bar{f}_k) = 0
  \end{equation*}
  (due to equation~\eqref{eq:complementarity1-KKT-MSC-2}).
\item \label{inequality6} The following inequality holds:
  \begin{equation*}
    \sum_{p\in P}\sum_{k\in K}\bar{f}_k\bar{w}_p \cdot \left[2 -
      \frac{x_{p,k}(t)}{\bar{x}_{p,k}} -
      \frac{\bar{x}_{p,k}}{x_{p,k}(t)} \right] \leq 0
  \end{equation*}
  The inequality holds, because $\forall z > 0, \; 2 - z - \frac{1}{z}
  \le 0$ and $\bar{f}_k, \forall k \in K$, $\bar{w}_p, \forall p \in
  P$ are positive.
\end{itemize}
}


\noindent
The above lemma suggests the following mechanism.

\begin{enumerate}

\item For every line pool $k \in K$, solve an instance of the
  single-pool case, using the decentralized mechanism in \cite{BKZ2009},
  obtaining the optimal solution   $(\matrixcol{x}{k}, \matrixcol{\vector{\Lambda}}{k})$.

\item The NOP calculates the cost of each pool and sets the variable
  $\zeta(t)$ to the average pool cost:
  \[
  \zeta(t) = \frac{1}{|K|} \sum_{k \in K}\mv{c}^T \cdot
  \matrixcol{\Lambda}{k}(t).
  \]
  Then, he updates the capacity proportion vector \mv{f} and assigns a
  larger percentage of the total capacity to the most ``expensive''
  line pools, so that their cost decreases. This update is described
  by the following differential equations:
  \begin{equation}
    \label{eq:f_update-2}
    \forall k \in K,\; \dot{f}_k(t) = \max\{0, \mv{c}^T \cdot
    \matrixcol{\Lambda}{k}(t) - \zeta(t)\}.
  \end{equation}
  Note that, at the end, the vector \mv{f} must be normalized, such
  that $\sum_{k \in K} f_k = 1$ (the proportion vector must completely
  divide the infrastructure at all times). This is done by dividing each
  $f_k(t)$ by $\sum_{k \in K}f_k(t)$.

\end{enumerate}

Roughly speaking, the convergence of the above mechanism  for a specific capacity proportion vector \mv{f} is guaranteed by the convergence of the
single-pool algorithm. Nevertheless, for the sake of completeness, we present the convergence proof in the Appendix (Lemma~\ref{lm:convergence}).
When the $|K|$ single-pool instances are solved, the NOP updates the vector \mv{f}, so that the expensive pools get cheaper. The goal is that all pools should have the same cost. When this happens, we know for the optimal solution of both MNET-II and MSC-II $(\bvec{x},\bvec{f})$ and the accompanying Lagrange multipliers, $(\bvec{\Lambda},\bar{\zeta})$, that:
\begin{itemize}
\item Equations \eqref{eq:selfish-KKT-MNET-2} and
  \eqref{eq:selfish-KKT-MSC-2} will match due to the fact that each
  LOP solves the \eqref{eq:muser-II} problem.
\item Equations \eqref{eq:zeta-KKT-MNET-2} (and
  \eqref{eq:zeta-KKT-MSC-2}) will be satisfied due to the updates to
  vector \mv{f} that take place at the end of each phase.
\item Equations \eqref{eq:complementarity1-KKT-MNET-2} (and
  \eqref{eq:complementarity1-KKT-MSC-2}) will be satisfied due to the
  resource price updating scheme chosen.
\item Equations \eqref{eq:complementarity2-KKT-MNET-2} (and
  \eqref{eq:complementarity2-KKT-MSC-2}) will be satisfied, because
  the NOP assures that at the end of each phase.
\end{itemize}

Hence, $(\bvec{x},\bvec{f})$ is the optimal solution of both (KKT-MSC-II) and \eqref{eq:MNET-II}, and thus the proposed mechanism solves \eqref{eq:MSC-diffU}. The next theorem summarizes the preceding discussion.

\begin{theorem}
  The above dynamic scheme of resource pricing, LOPs' bid updating and
  capacity proportion updating assures the monotonic convergence of
  the \eqref{eq:MNET-II} problem to the optimal solution. The
  algorithm may start from any initial state of resource prices, LOPs'
  bids and capacity proportion vector.
\end{theorem}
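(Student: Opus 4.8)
The plan is to establish convergence in two nested stages and then verify that the joint fixed point satisfies all the KKT conditions of \eqref{eq:MNET-II}. In the inner stage, I would fix the capacity proportion vector $\mv{f}$ (assumed to completely divide the infrastructure) and observe that, once $\mv{f}$ is fixed, the constraints of \eqref{eq:MNET-II} decouple across pools, so the problem splits into $|K|$ independent single-pool instances, one per $k \in K$. Each such instance is solved by the decentralized resource-pricing and bid-updating mechanism of \cite{BKZ2009}, whose monotonic convergence to the per-pool optimum $(\matrixcol{x}{k}, \matrixcol{\Lambda}{k})$ is already guaranteed. This secures equation \eqref{eq:selfish-KKT-MNET-2} (via each LOP solving \eqref{eq:muser-II}, as in the first theorem) and equation \eqref{eq:complementarity1-KKT-MNET-2} (via the per-edge price dynamics) for the current $\mv{f}$.

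In the outer stage, I would analyze the capacity-proportion dynamics \eqref{eq:f_update-2}, which shifts capacity toward pools whose cost $\mv{c}^T\matrixcol{\Lambda}{k}$ exceeds the running average $\zeta(t)$. The goal is to drive the system to the equalized-cost state $\mv{c}^T\matrixcol{\bar{\Lambda}}{k} = \bar{\zeta}$ for all $k$, which is precisely \eqref{eq:zeta-KKT-MNET-2}. To prove convergence I would use the Lyapunov function $V(\mv{\Lambda}(t)) = \tfrac{1}{2}(\mv{\Lambda}(t) - \bvec{\Lambda})^{T}(\mv{\Lambda}(t) - \bvec{\Lambda})$ and show $\dot{V} \le 0$ along the combined trajectory. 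The computation chains three estimates: first, the $\max\{0,\cdot\}$ projection term in the price update is dominated by the unprojected residual $y_{\ell,k}(t) - c_\ell \bar{f}_k$ (because for an inactive price the extra term only helps); second, the optimal complementarity identity $\sum_{\ell,k}\bar{\Lambda}_{\ell,k}(\bar{y}_{\ell,k} - c_\ell \bar{f}_k) = 0$ coming from \eqref{eq:complementarity1-KKT-MNET-2} removes the feasibility slack; and third, substituting the price--frequency relation $\mu_{p,k}(t) = \bar{w}_{p,k}/x_{p,k}(t)$ together with \eqref{eq:selfish-KKT-MNET-2} reduces $\dot{V}$ to a sum of terms $\bar{w}_{p,k}\,[\,2 - \tfrac{x_{p,k}(t)}{\bar{x}_{p,k}} - \tfrac{\bar{x}_{p,k}}{x_{p,k}(t)}\,]$, each nonpositive because $2 - z - 1/z \le 0$ for all $z>0$. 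Monotonicity of $V$ together with its positivity then yields convergence of $\mv{\Lambda}(t)$ to $\bvec{\Lambda}$, and hence, by Lemma~\ref{lm:main}, convergence of the objective value.

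Finally, I would check that the limit point jointly satisfies the remaining conditions: the budget/complementarity condition \eqref{eq:complementarity2-KKT-MNET-2} holds because the NOP renormalizes $\mv{f}$ so that $\sum_{k\in K} f_k = 1$ at the end of each phase (with $\bar{\zeta}>0$ by the argument ruling out the all-zero-price case), while feasibility is maintained throughout. Since \eqref{eq:selfish-KKT-MNET-2}--\eqref{eq:complementarity2-KKT-MNET-2} then all hold simultaneously, the limit is the unique (by strict concavity) optimum of \eqref{eq:MNET-II}, which by the first theorem coincides with the optimum of \eqref{eq:MSC-diffU}.

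The hard part will be coupling the two stages: the Lyapunov estimate above implicitly treats the per-pool prices $\matrixcol{\Lambda}{k}(t)$ as already at (or closely tracking) their inner optima while $\mv{f}$ still evolves. Making this rigorous requires either a two-timescale separation argument---the inner price/bid dynamics converging fast relative to the slow $\mv{f}$-update, so that the quasi-static prices $\bvec{\Lambda}(\mv{f})$ are well defined---or a single combined Lyapunov function in $(\mv{\Lambda},\mv{f})$ whose derivative stays nonpositive under the simultaneous dynamics. I would expect to resolve this by leaning on Lemma~\ref{lm:main}, which guarantees that the inner optimum is a well-defined function of the prices alone, so that the outer dynamics can be analyzed on the reduced price space without explicit dependence on $\mv{f}$.
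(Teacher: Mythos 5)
Your proposal follows the paper's argument essentially step for step: the same nested decomposition into $|K|$ single-pool solves at fixed $\mv{f}$ followed by the cost-equalizing update \eqref{eq:f_update-2}, the same Lyapunov function $V(\mv{\Lambda}(t)) = \tfrac{1}{2}(\mv{\Lambda}(t)-\bvec{\Lambda})^T(\mv{\Lambda}(t)-\bvec{\Lambda})$ with the identical three-inequality chain ending in $2 - z - 1/z \le 0$, and the same final matching of the four pairs of KKT conditions to identify the limit with the optimum of \eqref{eq:MNET-II} and hence of \eqref{eq:MSC-diffU}. The coupling issue you flag as the hard part is real, but be aware the paper does not resolve it with a two-timescale theorem either: its Lemma~\ref{lm:convergence} is proved only for a \emph{fixed} bid vector and \emph{fixed} proportions (the mechanism's phase structure runs the inner price/bid dynamics to convergence before each discrete update of $\mv{f}$), and the convergence of the outer $\mv{f}$-iteration to the equalized-cost state is asserted informally rather than proven --- so your candor about that gap exceeds the paper's own rigor on precisely that point.
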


\section{Experimental Study of the Multiple-Line Pool Cases}
\label{sec:exp-mlp}

In this section we present the experimental results for the
multiple-line pool case where the LOPs have different utility functions
per pool. We have implemented a discrete version of the
decentralized mechanism, whose pseudocode is as follows.
\begin{algorithmic}
  \STATE $f_k(0) = \frac{1}{|K|};$
  \REPEAT
  \STATE $t = t + 1;$
  \FORALL{$k \in K$}
  \STATE Solve an instance of the single-pool case for each line pool $k;$
  \ENDFOR
  \STATE $\mathrm{cost}_k(t) = \mv{c}^T \cdot \matrixcol{\Lambda}{k}(t);$
  \STATE $\zeta = \frac{\sum_{k \in K}\mathrm{cost}_k}{|K|};$
  \FORALL{$k \in K$}
  \STATE $\dot{f}_k(t) = \frac{\max\{0, \mathrm{cost}_k(t) -
    \zeta(t)\}}{\zeta(t)};$
  \STATE $f_k(t) = f_k(t - 1) + 0.1 \cdot \dot{f}_k(t);$
  \ENDFOR
  \STATE $\mathrm{total\_f} = \sum_{k \in K}f_k(t);$
  \FORALL{$k \in K$}
  \STATE $f_k(t) = \frac{f_k(t)}{\mathrm{total\_f}}; \qquad$
  \ENDFOR
  \UNTIL{$\mathrm{equal\_costs}(\mv{\mathrm{cost}}(t))$}
\end{algorithmic}
The algorithm was implemented
in C++ using the GNU g++ compiler (version 4.4) with the second
optimization level (-O2 switch) on.
Experiments were performed on synthetic and real-world data.

Synthetic data consisted of grid graphs having a number of 7 nodes
on the vertical axis and a number of nodes in $[120, 360]$ along
the horizontal axis; i.e., the size of
the grid graphs varied from $7 \times 120$ to $7 \times 360$.
The capacity of each edge was randomly chosen from $[10, 110)$.
Four line pools were defined. In each pool, there were three LOPs,
each one interested in a different line. Those lines had the first
edge $((0,3), (1,3))$ in common. The next edges of each line were
randomly chosen each time.

Real-world data concern parts of the German railway network (mainly
intercity train connections), denoted as R1 (280 nodes and
354 edges) and R2 (296 nodes and 393 edges).
The capacities of the edges were in $[8, 16]$. The total
number of lines varies from 100 up to 1000, depending on the size
of the networks. For each network, we defined four
line pools. The second, third and fourth pool differed from the first
in about 10\% of the lines (the new lines in each pool were randomly
selected from the available lines in each network).

In the experiments we measured the number of iterations needed to
find the correct vector \mv{f} of capacity proportions (we did not
concentrate on the solutions of the single-pool case, used as a
subroutine, since this case was investigated in \cite{BKZ2009}).
\REMOVED{
We started by assuming that the utility function for every LOP in
each pool is given by $U_{p,k}(x_{p,k}) = 100 \cdot \sqrt{x_{p,k}}$.
\begin{table}[!ht]
    \centering
    \subfloat{
    \centering
    \begin{tabular}[t]{c c} %
    \toprule
    $p$ & ~~No.~of updates of \mv{f}\\
    \midrule
    120 & 70\\
    180 & 21\\
    240 & 36\\
    300 & 83\\
    360 & 64\\
    \bottomrule
    \end{tabular}
    }
    \subfloat{
    \ \ \ \ \
    }
    \subfloat{
    \centering
    \begin{tabular}[t]{c c}
    \toprule
    Number of lines~ & ~No.~of updates of \mv{f}\\
    \midrule
    100 & 17\\
    200 & 26\\
    300 & 13\\
    \bottomrule
    \end{tabular}
    }
\caption{Number of updates of \mv{f} for grid graphs of size $7 \times p$ (left) and for R1 (right).}
\label{tab:mp-grid-R1}
\end{table}
\begin{table}[!ht]
    \centering
    \subfloat{
    \centering
    \begin{tabular}[t]{c c}
    \toprule
    Number of lines~ & ~No.~of updates of \mv{f}\\
    \midrule
    100 & 18\\
    200 & 33\\
    300 & 6\\
    400 & 13\\
    500 & 1\\
    \bottomrule
  \end{tabular}
  }
    \subfloat{
    \ \ \ \ \
    }
    \subfloat{
    \centering
    \begin{tabular}[t]{c c}
    \toprule
    Number of lines~ & ~No.~of updates of \mv{f}\\
    \midrule
    100 & 33\\
    200 & 24\\
    300 & 21\\
    400 & 8\\
    500 & 12\\
    600 & 15\\
    700 & 1\\
    800 & 7\\
    900 & 1\\
    1000 & 1\\
    \bottomrule
  \end{tabular}
  }
\caption{Number of updates of \mv{f} for R2 (left) and R3 (right).}
\label{tab:mp-710-gross}
\end{table}
The results for synthetic and real-world data are presented in
Tables~\ref{tab:mp-grid-R1} and \ref{tab:mp-710-gross}.
We observe a small number of necessary updates to the capacity
proportion vector \mv{f}, until the system reaches the optimum. The most
important reason for this is the use of the same utility function for
every pool by the LOPs, because the algorithm starts with the initial
values $f_k = \frac{1}{|K|}$ and the optimal values in this case are
quite close to these initial values.

Consequently, the next step is to use different utility functions
(being also our initial goal) in each pool for each LOP.
}
We investigated the following four scenarios:
\begin{enumerate}
\item[S1:] $U_{p,1}(x_{p,1}) = 10^4 \sqrt{x_{p,1}} \text{ and }
  U_{p,2}(x_{p,2}) = 10^4 \sqrt{x_{p,2}}, \; \forall p \in P$.

\item[S2:] $U_{p,1}(x_{p,1}) = \frac{3}{4} \cdot 10^4 \cdot \sqrt{x_{p,1}}
  \text{ and } U_{p,2}(x_{p,2}) = \frac{4}{5} \cdot 10^4 \cdot
  \sqrt{x_{p,2}}, \; \forall p \in P$.

\item[S3:] $U_{p,1}(x_{p,1}) = 10^4 \cdot \sqrt{x_{p,1}} \text{ and }
  U_{p,2}(x_{p,2}) = \frac{1}{2} \cdot 10^4 \cdot \sqrt{x_{p,2}}, \;
  \forall p \in P$.

\item[S4:] $U_{p,1}(x_{p,1}) = 10^4 \cdot \sqrt{x_{p,1}} \text{ and }
  U_{p,2}(x_{p,2}) = \frac{1}{4} \cdot 10^4 \cdot \sqrt{x_{p,2}}, \;
  \forall p \in P$.
\end{enumerate}
We report on experiments with the R1 network and two line pools
for all four scenarios, and on R2 for scenarios S1 and S2
(similar results hold for the other scenarios).
Table~\ref{tab:mp-dw} shows the results for 100, 200 and 300 lines
per pool for R1, and for 100 to 500 lines per pool for R2.
\begin{table}[!ht]
    \centering
    \subfloat{
    \centering
    \begin{tabular}[t]{c c c c c}
    \toprule
    Number of lines~~ & S1 & S2 & S3 & S4\\
    \midrule
    100 & 9 & 33 & 127 & 178\\
    200 & 12 & 33 & 127 & 178\\
    300 & 19 & 29 & 128 & 178\\
    \bottomrule
  \end{tabular}
  }
  \subfloat{
    \ \ \ \ \
    }
    \subfloat{
    \centering
    \begin{tabular}[t]{c c c}
    \toprule
    Number of lines~~ & S1 & S2\\
    \midrule
    100 & 33 & 52\\
    200 & 26 & 49\\
    300 & 1 & 40\\
    400 & 6 & 34\\
    500 & 1 & 37\\
    \bottomrule
  \end{tabular}
  }
\caption{Number of updates of \mv{f} for different utility functions
and number of lines per pool ($|K| = 2$) for R1 (left) and R2 (right).}
  \label{tab:mp-dw}
\end{table}
For S1 (same utility functions), we observe a small number of necessary updates to the capacity
proportion vector \mv{f}, until the system reaches the optimum. The main
reason for this is the use of the same utility function for
every pool by the LOPs, because the algorithm starts with the initial
values $f_k = \frac{1}{|K|}$ and the optimal values in this case are
quite close to these initial values.
For the other scenarios with different utility functions per pool
(S2, S3, S4), we observe a larger number of the updates required. We also
observe that the more different the utility functions of each LOP
in the two pools are, the larger the number of updates required
to reach the optimum.

Another interesting observation in the case of the different
utility functions per line pool, is that the number of updates of \mv{f} is
almost equal. This is due to the fact that the difference in
utility functions across line pools has a more significant effect
on the required number of updates than the difference in lines
among the pools (in other words, more steps are required to reach
the optimal values due to the different utility functions than due
to the different costs of the line pools).

In conclusion, the number of updates required by our mechanism
to converge (to the optimal values of vector \mv{f}) depends
largely on the exact parameters of the system of
differential equations~\eqref{eq:f_update-2}.

\section{Experimental Study of Disruptions in the Network}
\label{sec:disruptions}

We turn now to a different experimental study. We
assume that the network is currently operating at optimality
and that a few disruptions occur. These disruptions affect the capacity of some edges. This can
be due to technical problems leading to reducing the capacity of
those edges, or to increasing their capacity for a particular period
to handle increased traffic demand (e.g., during holidays, or rush hours)
by ``releasing" more infrastructure.

We examine the behavior of the algorithms for the single and
multiple pool cases in such situations. We investigated three
disruption scenarios:
\begin{enumerate}

\item[D1:] Reducing the capacity of a certain number of edges
  (chosen among the congested ones).

\item[D2:] Increasing the capacity of a certain number of edges
  (chosen among the congested ones).

\item[D3:] Reducing the capacity of a certain number of edges, while
  increasing the capacity of an equal number of a different set of edges 
  (chosen among the congested ones).

\end{enumerate}
%
We start from a known optimal solution to the problem. Then, we
add disruptions to a few edges and apply the algorithm. The relative
and absolute error for the differential equations were set to $0.1$.

These scenarios were tested on grid graphs and on the R1 network
(similar results hold for R2). For the grid graphs,
the lines were chosen randomly, but all of them shared the same first edge.
The number of lines in each pool were 10 and the capacities of the edges
were chosen randomly in $[4, 20]$.

\subsection{Single pool case}
\label{sec:sp}

For this case, we chose randomly, among the congested ones, 4 edges in the
case of grid graphs and 10 edges in the case of R1. Their capacity was
reduced (or increased) by 10\% and 50\%. In the experiments we measured
the number of updates required for finding the optimal values of $\mv{\Lambda}$
(resource prices per-unit-of-frequency).


The number of iterations
required for finding the optimal values of $\mv{\Lambda}$
for grid graphs and R1, when we start from a previous
optimal solution,  is presented in Tables \ref{tab:sp_d-grid} and \ref{tab:sp_d-R1}.
For comparison, the number of the required updates of $\mv{\Lambda}$
when we start from a random initial state is given in Tables \ref{tab:sp_grid}
and \ref{tab:sp_R1}. We observe the significantly
less number of updates required when we start from a previous
optimal solution. This is due to the fact that the disruptions caused
are not very big, and hence the new optimal solution is quite close to
the previous one. There are, however, one or two exceptions; i.e., we observe
in these cases a smaller number of updates when we start from a random initial solution. This
happens, because the algorithms for solving differential equations are
arithmetic methods that depend greatly on the exact parameters
given. This results in a few pathological cases such as these. One can
conclude, though, that in general the use of the previous optimal
solution leads to a smaller number of required updates for \mv{\Lambda}.

\subsection{Multiple pool Case}
\label{sec:mp}

We created two pools for these experiments. In the case of grid graphs,
the lines in each pool were chosen randomly, and in the case of R1
there was a 10\% difference in the lines between the two
pools. In these experiments we measured the number of updates of the
bids of the LOPs (bid vector \mv{w}). In none case there was a need to
update the capacity proportion vector \mv{f}.

The results are shown in Tables \ref{tab:mp_d-grid} and
\ref{tab:mp_d-R1}. One can see that only rarely there is a need to
update the bid vector \mv{w}. Especially for the R1 network, we had to
introduce disruptions of 90\% of the original capacity to get the bid
vector to be updated. Hence, the algorithm reaches the optimal solution
quite fast. The important observation is that, starting from the previous
optimal solution, we avoid the update of the capacity proportion vector
\mv{f}, which is the most expensive operation.

\section{Conclusions}
\label{sec:conclusions}

We have studied a variant of the robust multiple-pool line planning problem defined in
\cite{BKZ2009}, where the LOPs have different utility functions per
pool. We have shown that a dynamic, decentralized mechanism exists for
this problem that eventually converges to the optimal solution.

We have also studied the above mechanism experimentally, showing that
the exact behavior of the algorithm greatly depends on the exact input
parameters; however, the convergence is in general quite fast.

Moreover, we studied the case that disruptions take place in the
network. We have seen that in most cases it is much better to take
advantage of the previous (optimal) solution to bootstrap the
algorithm.
\begin{table}[tbp]
  \centering
  \begin{tabular}{c c r r r}
    \toprule
    Disruption~ & ~~$p$~~ & D1 & D2 & D3\\
    \midrule
    \multirow{5}{*}{10\%} & 120 & 1292 & 340 & 9983\\
    & 180 & 1235 & 395 & 550\\
    & 240 & 317 & 453 & 407\\
    & 300 & 4005 & 556 & 1337\\
    & 360 & 163 & 8484 & 542\\
    \midrule
    \multirow{5}{*}{50\%} & 120 & 403 & 480 & 1022\\
    & 180 & 248 & 1116 & 875\\
    & 240 & 409 & 498 & 533\\
    & 300 & 3966 & 1284 & 1180\\
    & 360 & 751 & 658 & 712\\
    \bottomrule
  \end{tabular}
  \caption{Required number of updates of \mv{\Lambda} for grid graphs
    with sizes $7 \times p$, when the algorithm starts from a previous
    optimal state, until the system reaches the equilibrium
    point after the disruptions under scenarios D1, D2, and D3.}
  \label{tab:sp_d-grid}
\end{table}

\begin{table}[tbp]
  \centering
  \begin{tabular}{c c r r r}
    \toprule
    Disruption~ & ~Number of lines~~ & D1 & D2 & D3\\
    \midrule
    \multirow{3}{*}{10\%} & 100 & 10335 & 90085 & 464\\
    & 200 & 32466 & 2806 & 5033\\
    & 300 & 4171 & 276 & 5208\\
    \midrule
    \multirow{3}{*}{50\%} & 100 & 8409 & 1057 & 1506\\
    & 200 & 1042 & 1109 & 4314\\
    & 300 & 5430 & 974 & 1058\\
    \bottomrule
  \end{tabular}
  \caption{Required number of updates of \mv{\Lambda} for R1, when the
  algorithm starts from a previous optimal state, until
     the system reaches the equilibrium point after the disruptions
     under scenarios D1, D2, and D3.}
  \label{tab:sp_d-R1}
\end{table}
\begin{table}[tbp]
  \centering
  \begin{tabular}[b]{c c r}
    \toprule
    Case of Disruption~ & ~~$p$~~ & ~Number of Updates of \mv{\Lambda}\\
    \midrule
    \multirow{5}{*}{10\%} & 120 & 6701\\
    & 180 & 6643\\
    & 240 & 7835\\
    & 300 & 6813\\
    & 360 & 5854\\
    \midrule
    \multirow{5}{*}{50\%} & 120 & 7381\\
    & 180 & 7246\\
    & 240 & 6468\\
    & 300 & 6197\\
    & 360 & 7617\\
    \bottomrule
  \end{tabular}
  \caption{Number of updates of \mv{\Lambda} for grid graphs of size $7\times p$, when
    the algorithm starts from a random initial state.}
  \label{tab:sp_grid}
\end{table}
\begin{table}[tbp]
  \centering
  \begin{tabular}[b]{c c c}
    \toprule
    Number of lines~ & ~Number of updates of \mv{\Lambda}\\
    \midrule
    100 & 12393\\
    200 & 6641\\
    300 & 7817\\
    \bottomrule
  \end{tabular}
  \caption{Number of updates of \mv{\Lambda} for R1, when
    the algorithm starts from a random initial state.}
  \label{tab:sp_R1}
\end{table}
\begin{table}[tbp]
  \centering
  \begin{tabular}{c c r r r}
    \toprule
    Disruptions~ & ~~$p$~~ & D1 & D2 & D3\\
    \midrule
    \multirow{5}{*}{10\%} & 120 & 0 & 0 & 0\\
    & 180 & 0 & 0 & 0\\
    & 240 & 0 & 0 & 0\\
    & 300 & 0 & 0 & 0\\
    & 360 & 0 & 0 & 0\\
    \midrule
    \multirow{5}{*}{50\%} & 120 & 0 & 2 & 1\\
    & 180 & 0 & 2 & 0\\
    & 240 & 0 & 0 & 0\\
    & 300 & 0 & 1 & 2\\
    & 360 & 0 & 2 & 2\\
    \bottomrule
  \end{tabular}
  \caption{Required number of updates of \mv{w} for grid graphs of
  sizes $7\times p$ for scenarios D1, D2, D3, when the
  algorithm starts from a previous optimal state, so that the system returns
  to an equilibrium point after a disruption.}
  \label{tab:mp_d-grid}
\end{table}

\begin{table}[tbp]
  \centering
  \begin{tabular}{c c r r r}
    \toprule
    Disruption~ & ~Number of lines~ & D1 & D2 & D3\\
    \midrule
    \multirow{3}{*}{10\%} & 100 & 0 & 0 & 0\\
    & 200 & 0 & 0 & 0\\
    & 300 & 0 & 0 & 0\\
    \midrule
    \multirow{3}{*}{50\%} & 100 & 0 & 0 & 0\\
    & 200 & 0 & 0 & 0\\
    & 300 & 0 & 0 & 0\\
    \midrule
    \multirow{3}{*}{90\%} & 100 & 0 & 3 & 0\\
    & 200 & 0 & 2 & 2\\
    & 300 & 0 & 0 & 0\\
    \bottomrule
  \end{tabular}
  \caption{Required number of updates of \mv{w} for R1 for
    scenarios D1, D2, D3, when the algorithm starts from a
    previous optimal state, so that the system returns to an
    equilibrium point after a disruption.}
    \label{tab:mp_d-R1}
\end{table}

\section*{APPENDIX}

\REMOVED{
\subsection*{A.1 Pseudocode of Algorithm}

The pseudocode of the discrete version of our decentralized mechanism is as follows:

\begin{algorithmic}
  \STATE $f_k(0) = \frac{1}{|K|};$
  \REPEAT
  \STATE $t = t + 1;$
  \FORALL{$k \in K$}
  \STATE Solve an instance of the single-pool case for each line pool $k;$
  \ENDFOR
  \STATE $\mathrm{cost}_k(t) = \mv{c}^T \cdot \matrixcol{\Lambda}{k}(t);$
  \STATE $\zeta = \frac{\sum_{k \in K}\mathrm{cost}_k}{|K|};$
  \FORALL{$k \in K$}
  \STATE $\dot{f}_k(t) = \frac{\max\{0, \mathrm{cost}_k(t) -
    \zeta(t)\}}{\zeta(t)};$
  \STATE $f_k(t) = f_k(t - 1) + 0.1 \cdot \dot{f}_k(t);$
  \ENDFOR
  \STATE $\mathrm{total\_f} = \sum_{k \in K}f_k(t);$
  \FORALL{$k \in K$}
  \STATE $f_k(t) = \frac{f_k(t)}{\mathrm{total\_f}}; \qquad$
  \ENDFOR
  \UNTIL{$\mathrm{equal\_costs}(\mv{\mathrm{cost}}(t))$}
\end{algorithmic}

\subsection*{A.2 Omitted Proofs}
}

\textbf{Proof of Lemma~\ref{lm:main}}
\begin{proof}
  The Lagrangian function of the problem is:
  \begin{equation*}
    L(\mv{x}, \mv{f}, \mv{\Lambda}, \zeta) = \\
    \sum_{p\in P}\sum_{k \in K} \left[ U_{p,k}(x_{p,k}) - x_{p,k}
      \cdot \mu_{p,k}(\mv{\Lambda}) \right] + \sum_{k\in K} f_k \cdot
    \left[\mv{c}^T\matrixcol{\Lambda}{k} - \zeta \right] + \zeta.
  \end{equation*}
  The dual problem of \eqref{eq:MSC-diffU} is the following:
  \begin{equation*}
    \begin{aligned}
      \max \quad & D(\mv{\Lambda}, \zeta)\\
      s.t. \quad & \Lambda_{\ell,k} \ge 0 \quad \forall \ell \in L, \,
      \forall k \in K\\
      & \zeta \ge 0,
    \end{aligned}
  \end{equation*}
  where
  \begin{align*}
    & D(\mv{\Lambda}, \zeta)  =\\
    & = \max\left\{L(\mv{x}, \mv{f}, \mv{\Lambda}, \zeta) \colon
      \mv{x}, \mv{f} \ge \mv{0} \right\} \\
    & = \max_{\mv{x}, \mv{f} \ge \mv{0}} \left\{ \sum_{p \in P}\sum_{k
        \in K}\left[ U_{p,k}(x_{p,k}) - x_{p,k}\sum_{l \in
          L}\Lambda_{\ell, k}R_{\ell, p}(k) \right] + \sum_{k \in
        K}f_k \left[ \sum_{\ell \in L}\Lambda_{\ell,k}c_{\ell} - \zeta
      \right] +
      \zeta \right\}\\
    & = \max_{\mv{x} \ge \mv{0}} \left\{ \sum_{p \in P}\sum_{k \in
        K}\left[ U_p(x_p) - x_{p,k}\mu_{p,k}(\mv{\Lambda}) \right]
    \right\} + \max_{\mv{f} \ge \mv{0}} \left\{ \sum_{k \in K}f_k
      \left[ \sum_{\ell \in L}\Lambda_{\ell, k}c_{\ell} - \zeta
      \right] \right\} + \zeta.
  \end{align*}

  For the part
  \begin{equation*}
    P_1(\mv{\Lambda}) = \max_{\mv{x} \ge \mv{0}} \left\{ \sum_{p \in
        P} \left[ U_p(x_p) - \sum_{k \in K}x_{p,k} \cdot
        \mu_{p,k}(\mv{\Lambda}) \right] \right\}
  \end{equation*}
  we have, that it is a simple maximization problem the solution of
  which depends exclusively on the optimal vector of resource prices
  \bvec{\Lambda}.

  For the part
  \begin{align*}
    P_2(\mv{\Lambda}, \zeta) & = \max_{\mv{f} \ge \mv{0}} \left\{
      \sum_{k \in K} f_k \cdot \left[ \sum_{\ell \in
          L}\Lambda_{\ell,k} \cdot c_{\ell} - \zeta \right] \right\}
    + \zeta\\
    & = \max_{\mv{f} \ge \mv{0}} \left\{ \sum_{k \in K} \cdot \left(
        \mv{c}^T \cdot \matrixcol{\Lambda}{k} \right) + \zeta \cdot
      \left( 1 - \sum_{k \in K}f_k \right) \right\}
  \end{align*}
  we have, that at optimality $\zeta \cdot \left( 1 - \sum_{k \in
      K}f_k \right)$ equals to zero. So, we get:
  \begin{align*}
    P_2(\mv{\Lambda}, \zeta) & = \max_{\mv{1}^T\mv{f} = 1 \colon
      \mv{f} \ge \mv{0}} \left\{ \sum_{k \in K}f_k \cdot \left(
        \mv{c}^T \cdot \matrixcol{\Lambda}{k} \right) \right\}\\
    & = \max_{k \in K} \left\{ \mv{c}^T \cdot \matrixcol{\Lambda}{k}
    \right\} \\
    & = \min \left\{ \mv{z} \colon \mv{z} \cdot \mv{1}^T \ge \mv{c}^T \cdot
      \mv{\Lambda} \right\}.
  \end{align*}
  That is, $P_2(\mv{\Lambda}, \zeta)$ calculates the maximum cost
  among the pools. This, however, depends only on the optimal vector
  of resource prices \bvec{\Lambda}.
\qed
\end{proof}
The convergence of our decentralized mechanism is shown by the next lemma.
\begin{lemma}
\label{lm:convergence}
For any choice of \emph{fixed} bid vector $\bvec{w} = (w_p)_{p\in P}$ offered by the LOPs,
and any \emph{fixed} vector of proportions $\bvec{f} = (f_k)_{k\in K}$ determined by the NOP,
the resource price updating scheme makes the resource prices converge to the corresponding
optimal vector $\bvec{\La}$ (for these particular given bids and proportions).
\end{lemma}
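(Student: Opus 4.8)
The plan is to treat the resource price updating scheme as a continuous-time dynamical system and to establish convergence by a Lyapunov argument. First I would make the dynamics explicit: with $\bvec{w}$ and $\bvec{f}$ held fixed, the NOP allocates $x_{p,k}(t) = w_{p,k}/\mu_{p,k}(\mv{\Lambda}(t))$ (the instantaneous MNET-II allocation, cf.~\eqref{eq:selfish-KKT-MNET-2}) with $\mu_{p,k}(\mv{\Lambda}) = \sum_{\ell\in L}\Lambda_{\ell,k}R_{\ell,p}(k)$, so the aggregate load $y_{\ell,k}(t) = \sum_{p\in P}R_{\ell,p}(k)\,x_{p,k}(t)$ is a function of the current prices alone. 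The price on each (edge, pool) pair is driven by the projected law
\begin{equation*}
  \dot{\Lambda}_{\ell,k}(t) = \max\{0,\, y_{\ell,k}(t) - c_\ell \bar f_k\}\cdot \Ind{\Lambda_{\ell,k}(t)=0} + (y_{\ell,k}(t) - c_\ell \bar f_k)\cdot \Ind{\Lambda_{\ell,k}(t)>0},
\end{equation*}
which raises the price of an over-demanded resource and lowers that of an under-demanded one, while the $\max$-projection keeps prices non-negative. I would first check that a rest point of this system is exactly a price vector satisfying the complementarity and feasibility conditions, i.e.\ the optimal $\bvec{\Lambda}$ for the fixed $(\bvec{w},\bvec{f})$.

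The core of the argument is the candidate Lyapunov function
\begin{equation*}
  V(\mv{\Lambda}(t)) = \tfrac{1}{2}\,\bigl(\mv{\Lambda}(t) - \bvec{\Lambda}\bigr)^T\bigl(\mv{\Lambda}(t) - \bvec{\Lambda}\bigr),
\end{equation*}
which is non-negative and vanishes only at $\bvec{\Lambda}$. I would differentiate $V$ along trajectories and bound $\dot V$ through a chain of four reductions. Step one discards the projection: on the active-price region the term equals $(\Lambda_{\ell,k}-\bar\Lambda_{\ell,k})(y_{\ell,k}-c_\ell\bar f_k)$, and on the boundary $\Lambda_{\ell,k}=0$ a short case analysis (either $y_{\ell,k}-c_\ell\bar f_k\ge 0$, where it is an equality, or $<0$, where $-\bar\Lambda_{\ell,k}\cdot 0 \le -\bar\Lambda_{\ell,k}(y_{\ell,k}-c_\ell\bar f_k)$) shows $\dot V \le \sum_{\ell,k}(\Lambda_{\ell,k}-\bar\Lambda_{\ell,k})(y_{\ell,k}-c_\ell\bar f_k)$. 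Step two inserts $\pm\bar y_{\ell,k}$ and uses the complementary slackness $\sum_{\ell,k}\bar\Lambda_{\ell,k}(\bar y_{\ell,k}-c_\ell\bar f_k)=0$ from \eqref{eq:complementarity1-KKT-MNET-2} to drop the $\bar y_{\ell,k}-c_\ell\bar f_k$ part. Step three swaps the order of summation and substitutes $y_{\ell,k}=\sum_p R_{\ell,p}(k)x_{p,k}$, converting the edge-indexed sum into $\sum_{p,k}(x_{p,k}(t)-\bar x_{p,k})(\mu_{p,k}(t)-\bar\mu_{p,k})$. Step four substitutes $\mu_{p,k}=w_{p,k}/x_{p,k}$ to obtain $\sum_{p,k}w_{p,k}\bigl[2 - x_{p,k}(t)/\bar x_{p,k} - \bar x_{p,k}/x_{p,k}(t)\bigr]$, which is $\le 0$ since $2 - z - 1/z \le 0$ for every $z>0$ and the bids are non-negative. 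Hence $\dot V \le 0$, with equality only when $x_{p,k}(t)=\bar x_{p,k}$ for all $p,k$.

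The delicate point, and the one I expect to be the main obstacle, is upgrading $\dot V \le 0$ to genuine convergence $\mv{\Lambda}(t)\to\bvec{\Lambda}$. Since $2-z-1/z<0$ strictly for $z\neq 1$, the derivative is strictly negative whenever some $x_{p,k}(t)\neq\bar x_{p,k}$, i.e.\ whenever the induced path prices $\mu_{p,k}(t)$ differ from $\bar\mu_{p,k}$; this pins down the path prices in the limit but not directly the edge prices, because the map $\mv{\Lambda}\mapsto(\mu_{p,k})$ need not be injective. Recovering the full edge-price vector requires a LaSalle-type invariance argument restricted to the largest invariant set inside $\{\dot V = 0\}$, on which the dynamics force each $\Lambda_{\ell,k}$ to settle at the value consistent with \eqref{eq:complementarity1-KKT-MNET-2}. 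I would close by invoking Lyapunov's theorem, noting that boundedness of the sublevel sets of $V$ keeps trajectories bounded, to conclude asymptotic stability of $\bvec{\Lambda}$, which is the claimed convergence.
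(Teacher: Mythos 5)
Your proposal matches the paper's own proof essentially step for step: the same Lyapunov function $V(\mv{\Lambda}(t)) = \tfrac{1}{2}(\mv{\Lambda}(t)-\bvec{\Lambda})^T(\mv{\Lambda}(t)-\bvec{\Lambda})$, the same projected price dynamics, and the same three-inequality chain (dropping the projection by a boundary case analysis, inserting $\pm\bar{y}_{\ell,k}$ and invoking complementary slackness, then substituting $\mu_{p,k}=w_{p,k}/x_{p,k}$ to reach $2-z-1/z\le 0$). Your closing remark about needing a LaSalle-type invariance argument to pass from $\dot V\le 0$ (which only pins down the path prices $\mu_{p,k}$, the map $\mv{\Lambda}\mapsto(\mu_{p,k})$ being non-injective) to convergence of the edge prices themselves is in fact a refinement: the paper stops at $\dot V\le 0$ and asserts convergence without addressing this point.
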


\begin{proof}
We use the Lyapunov function
\begin{equation*}
  V(\mm{\Lambda}(t)) = \frac{1}{2} \cdot \left( \mm{\Lambda}(t) -
    \bvec{\Lambda} \right)^T \cdot \left( \mm{\Lambda}(t) -
    \bvec{\Lambda} \right)
\end{equation*}
and showing that $\frac{dV(\mv{\Lambda}(t))}{dt} \le 0$. Indeed,
\begin{align*}
  \frac{dV(\mm{\Lambda}(t))}{dt} & =\\
  & = \sum_{\ell\in L}\sum_{k\in K} \left(
    \Lambda_{\ell,k}(t)-\bar{\Lambda}_{\ell,k} \right) \cdot
  \dot{\Lambda}(t) \\
  & = \sum_{\ell\in L}\sum_{k\in K} \left(
    \Lambda_{\ell,k}(t)-\bar{\Lambda}_{\ell,k} \right) \cdot \left[
    \max\left\{0, y_{\ell,k}(t) - c_{\ell}\bar{f}_k \right\} \cdot
    \Ind{\Lambda_{\ell,k}(t)=0} \right. \\
  & \left.\hspace*{5cm} + \left(y_{\ell,k}(t) -
      c_{\ell}\bar{f}_k\right) \cdot \Ind{\Lambda_{\ell,k}(t)>0}
  \right]\\
  & \leq \sum_{\ell\in L}\sum_{k\in K} \left( \Lambda_{\ell,k}(t) -
    \bar{\Lambda}_{\ell,k} \right) \cdot \left[ y_{\ell,k}(t) -
    c_{\ell} \bar{f}_k \right] \\
  & = \sum_{\ell\in L}\sum_{k\in K} \left( \Lambda_{\ell,k}(t) -
    \bar{\Lambda}_{\ell,k} \right) \cdot \left[ y_{\ell,k}(t) -
    \bar{y}_{\ell,k} + \bar{y}_{\ell,k} - c_{\ell} \bar{f}_k
  \right]\\
  & \leq \sum_{\ell\in L}\sum_{k\in K} \left( \Lambda_{\ell,k}(t) -
    \bar{\Lambda}_{\ell,k} \right) \cdot \left[ y_{\ell,k}(t) -
    \bar{y}_{\ell,k} \right] \\
  & = \sum_{\ell\in L}\sum_{k\in K} \left( \Lambda_{\ell,k}(t) -
    \bar{\Lambda}_{\ell,k} \right) \cdot \sum_{p\in P} R_{\ell,p}(k)
  \cdot \left[ x_{p,k}(t) - \bar{x}_{p,k} \right]\\
  & = \sum_{p\in P}\sum_{k\in K} \left[ x_{p,k}(t) - \bar{x}_{p,k}
  \right] \cdot \left( \mu_{p,k}(t) - \bar{\mu}_{p,k} \right)\\
  & = \sum_{p\in P}\sum_{k\in K} \left[ x_{p,k}(t) - \bar{x}_{p,k}
  \right] \cdot \left( \frac{\bar{w}_{p,k}}{x_{p,k}(t)} -
    \frac{\bar{w}_{p,k}}{\bar{x}_{p,k}} \right)\\
  & = \sum_{p\in P}\sum_{k\in K}\bar{w}_{p,k} \cdot \left[2 -
    \frac{x_{p,k}(t)}{\bar{x}_{p,k}} -
    \frac{\bar{x}_{p,k}}{x_{p,k}(t)} \right]\\
  & \leq 0
\end{align*}
The first inequality
  \begin{align*}
    & \sum_{\ell\in L}\sum_{k\in K} \left(
      \Lambda_{\ell,k}(t)-\bar{\Lambda}_{\ell,k} \right) \cdot \left[
      \max\left\{0, y_{\ell,k}(t) - c_{\ell}\bar{f}_k \right\} \cdot
      \Ind{\Lambda_{\ell,k}(t)=0} \right. \\
    & \left.\hspace*{5cm} + \left(y_{\ell,k}(t) -
        c_{\ell}\bar{f}_k\right) \cdot \Ind{\Lambda_{\ell,k}(t)>0}
    \right]\\
    & \leq \sum_{\ell\in L}\sum_{k\in K} \left( \Lambda_{\ell,k}(t) -
      \bar{\Lambda}_{\ell,k} \right) \cdot \left[ y_{\ell,k}(t) -
      c_{\ell}
      \bar{f}_k \right]\\
  \end{align*}
  clearly holds $ \forall (\ell, k) \colon \Lambda_{\ell,k}(t)
  > 0$, but it also holds when $ \forall (\ell, k) \colon \Lambda_{\ell,k}(t) =
  0$. The last part holds, because it will either be $y_{\ell,k}(t) -
  c_{\ell}\bar{f}_{k} \ge 0$ (then it is obvious) or
  $y_{\ell,k}(t) - c_{\ell}\bar{f}_k < 0$. In this case:
  \begin{equation*}
    \left( \Lambda_{\ell,k}(t) - \bar{\Lambda}_{\ell,k} \right) \cdot
    \max\left\{0, y_{\ell,k}(t) - c_{\ell}\bar{f}_k \right\} = -
    \bar{\Lambda}_{\ell,k} \cdot 0 < -\bar{\Lambda}_{\ell,k} \cdot
    \left[ y_{\ell,k}(t) - c_{\ell}\bar{f}_{k} \right].
  \end{equation*}
The second inequality
  \begin{align*}
    & \sum_{\ell\in L}\sum_{k\in K} \left( \Lambda_{\ell,k}(t) -
      \bar{\Lambda}_{\ell,k} \right) \cdot \left[ y_{\ell,k}(t) -
      \bar{y}_{\ell,k} + \bar{y}_{\ell,k} - c_{\ell} \bar{f}_k
    \right]\\
    & \leq \sum_{\ell\in L}\sum_{k\in K} \left( \Lambda_{\ell,k}(t) -
      \bar{\Lambda}_{\ell,k} \right) \cdot \left[ y_{\ell,k}(t) -
      \bar{y}_{\ell,k} \right]\\
  \end{align*}
  holds because for the optimal matrix of
  resource prices \bvec{\Lambda} (for given \bvec{w} and \bvec{f}) it
  holds that
  \begin{equation*}
    \sum_{\ell \in L}\sum_{k \in K} \bar{\Lambda}_{\ell,k}
    (\bar{y}_{\ell,k} - c_{\ell}\bar{f}_k) = 0
  \end{equation*}
  (due to equation~\eqref{eq:complementarity1-KKT-MSC-2}).
The third inequality
  \begin{equation*}
    \sum_{p\in P}\sum_{k\in K}\bar{f}_k\bar{w}_p \cdot \left[2 -
      \frac{x_{p,k}(t)}{\bar{x}_{p,k}} -
      \frac{\bar{x}_{p,k}}{x_{p,k}(t)} \right] \leq 0
  \end{equation*}
  holds because $\forall z > 0, \; 2 - z - \frac{1}{z}
  \le 0$ and $\bar{f}_k, \forall k \in K$, $\bar{w}_p, \forall p \in
  P$ are positive.
\qed
\end{proof}

\end{document}